\documentclass{article}

\usepackage{amsmath,amsthm,amssymb,url}
\usepackage{tikz}
\usetikzlibrary{decorations.pathreplacing}
\usetikzlibrary{arrows}

\usepackage{booktabs,float}
\usepackage{fullpage}

% proclamation definitions
\newtheorem{theorem}{Theorem}[section]
\newtheorem{lemma}[theorem]{Lemma}
\newtheorem{corollary}[theorem]{Corollary}
\newtheorem{proposition}[theorem]{Proposition}
\theoremstyle{definition}

\newtheorem{remark}[theorem]{Remark}

% custom operators

\newcommand{\xa}{\star}

\newcommand{\xd}{\circ}
\newcommand{\cd}{\cdot}

% custom commands

\newcommand{\Sij}{\ensuremath{S_{i \rightarrow j}}}

\begin{document}

\title{Shortened regenerating codes}

\author{Iwan M. Duursma}

\date{April 30, 2015}

\maketitle

\begin{abstract}
For general exact repair regenerating codes, the optimal trade-offs between storage size and repair bandwith remain undetermined. Various outer bounds and partial results have been proposed.
Using a simple chain rule argument we identify nonnegative differences between the functional repair and the exact repair outer bounds. One of the differences is then bounded from below by the repair data of a shortened subcode.
Our main result is a new outer bound for an exact repair regenerating code in terms of its shortened subcodes. In general the new outer bound is implicit and depends on the choice of shortened subcodes. For the linear case we obtain explicit bounds. % from a greedy choice for the shortened subcodes.
%, and in that case the new outer bound reduces to the usual format where the file size for the original code is bounded by a linear combination of storage size per node and repair bandwith %between nodes.
\end{abstract}

\section{Introduction}

Regenerating codes were introduced by Dimakis, Godfrey, Wu, Wainwright and Ramchandran \cite{DGWWR10}. Their main application is in large distributed storage systems where they lead to significant savings by optimizing the trade-off between storage size and repair bandwith. In a distributed storage system (DSS) data is stored at $N$ nodes such that it can be recovered from any combination of $k$ nodes. If a node fails it can be rebuilt by retrieving the information needed for its repair from any combination of $d$ other nodes. An encoding scheme realizing these parameters is called an $(N,k,d)$ regenerating code. 

An $(N,k,d)$ code comes with a secondary set of parameters $(B,\alpha,\beta)$. For data of total size $B$, a part of size at most $\alpha$ is stored at a single node, and bandwith between a node and any of the $d$ nodes helping in its repair is limited to $\beta$. The gains in a DSS are obtained by using a bandwith $d \beta$ for the repair of a single node that is possibly larger than its data size $\alpha$ but much smaller than the total data size $B$. The challenge is, given $(N,k,d)$, to optimize the trade-off between the storage $\alpha$ per node and the repair bandwith $\beta$ between nodes in order to store data of size $B$. Constructive solutions that yield lower bounds for $B$, or inner bounds, can be found in \cite{RSK11}, \cite{SRKR12a}, \cite{SRKR12b}, \cite{GPV13}, \cite{VigneshT13}, \cite{Ernvall13}, \cite{ShumHCXL14}, \cite{TianSAVK15}.
%Upper bounds for $B$, or outer bounds, appear in\cite{}.

Without the added access and repair constraints, $N$ nodes will be able to store data of size $B=N\alpha$. The requirement that data can be recovered from any $k$ nodes reduces this amount to $B \leq k\alpha$. The requirement that a node can be repaired with help from any $d$ other nodes introduces further overhead and reduces the size of the data. A first upper bound that takes into account both access and repair requirements is 
\begin{equation} \label{eqfrk}
B \leq  (k-\ell)\alpha + \binom{\ell}{2} \beta + \ell (d+1-k) \beta, ~~~~0 \leq \ell \leq k.
\end{equation}
The upper bound holds for functional repair and thus for exact repair regenerating codes. In the exact repair scenario it is required that a damaged node be rebuilt to its original form. Functional repair uses the weaker assumption that a node be rebuilt to a form that preserves the functionality of the DSS.  The upper bound (\ref{eqfrk}) is attained in the functional repair scenario \cite{DGWWR10} (using arguments from network coding) but is not optimal for exact repair regenerating codes. This was first shown by Tian \cite{T14} for codes of type $(n=4,k=3,d=3)$. Further results on outer bounds for exact repair are in \cite{SSK13}, \cite{D14}, \cite{PK15}, \cite{Tian15a}.  

In this paper we present a new improved outer bound for exact repair regenerating codes. First we refine the proof for the outer bound (\ref{eqfrk}) using a simple chain rule argument. This exposes several nonnegative error terms. We then focus on one particular error term and as main result we formulate an improved version of the outer bound where this error term is bounded from below. Theorems 3.2 and 4.2 in \cite{D14} describe two other improvements of the outer bound (\ref{eqfrk}). The arguments that are used in \cite{D14} are different from the ones used in this paper. In Appendix \ref{A433} the different improvements are illustrated by three different proofs for the improved outer bound for the case $(n=4,k=3,d=3)$. 
%We do not aim for a full comparison among the three different improvements in this paper but note that the different arguments can be used in combination (or should be used in combination for best results).
Before we describe the main results in more detail we introduce the notation.
    
\subsection{Notation} \label{S:1}

We will use the entropy terminology to express the various bounds. When the random variable $X$ corresponds to the drawing of a vector, uniformly at random, from a finite vector space $X$ we have $H(X) = \log |X| = \dim X$, for the appropriate choice of base in the logarithm. For subspaces $X, Y \subset V$, the usual dictionary between entropy and dimension includes the relations % \cite{}, \cite{}
\[
\begin{array}{lclcl}
\text{(joint entropy)} & &H(X,Y) &= &\dim (X+Y) \\
\text{(conditional entropy)} & &H(X|Y) &= &\dim X / (X \cap Y) = \dim (X+Y) / Y  \\
\text{(mutual entropy)} & &I(X;Y) &= &\dim X \cap Y
\end{array}
\]
To an exact repair regenerating code of type $(N,k,d)$ with secondary parameters $(B,\alpha,\beta)$ correspond random variables $M$, $\{ W_j : 1 \leq j \leq n \}$ and $\{ \Sij : 1 \leq i,j \leq n, i \neq j \}$ that satisfy several  entropy constraints. 

The variable $M$ describes the data to be stored at the $N$ nodes and has entropy $H(M)=B$. The variable $W_j$ is a function of $M$ that describes the data stored at node $j$, and the variable $\Sij$ is a  function of $W_i$ that describes the helper information provided by node $i$ to repair node $j$. The entropy constraints are the following.
\[
\begin{array}{lclclcl}
%(1)~~H(M) = B. \\ %\label{en1} \\
\text{(Storage)} &~~ &H(W_j) = \alpha, &~~~ &H(W_j|M) = 0, &~~~ &H(M|W_J) = 0~~\text{for $|J|\geq k$.}  \\ %\label{en2} \\
\text{(Repair)} &~~ &H(\Sij) = \beta, &~~~ &H(\Sij|W_i) =0, &~~~ &H(W_j|S_{I \rightarrow j}) = 0~~\text{for $|I| \geq d, j \not \in I$.} %\label{en3}
\end{array}
\]
Here $W_J$ denotes the joint distribution $W_J = (W_j : j \in J)$ and $S_{I \rightarrow j}$ denotes the joint distribution $S_{I \rightarrow j} = ( \Sij : i \in I),$ for $j \not \in I$. 
Assuming uniform distributions for each of the variables, the conditions $H(M)=B,$~$H(W_j) = \alpha$ and $H(\Sij) = \beta$ describe the size of the underlying space for $M, W_i$ and $\Sij$, respectively. 
%The condition $H(W_i|M) = 0$ says that the information stored on node $i$ is completely determined by the file $M$, and similarly 
%$H(S_i^j|W_i) =0$ says that helper information provided by node $i$ is completely determined by information stored on node $i$. 
The access condition $H(M|W_J) = 0$ for $|J|\geq k$ says that the data can be recovered from information stored on any $k$ nodes, and similarly $H(W_j|S_{I \rightarrow j}) = 0$ for $|I| \geq d, j \not \in I$ says that node $j$ can be rebuilt with helper information received from any $d$ remaining nodes.

For a linear regenerating code the above can be restated in terms of generating and parity-check matrices. The generator matrix is a matrix of size $B \times N\alpha$ with $B$ independent rows and $N$ blocks of columns, with $\alpha$ columns in each block.
The variable $M$ corresponds to the columns space of the matrix, the variable $W_j$ to the column space of the $j$th block of $\alpha$ columns, and the variable $\Sij$ to a subspace of $W_i$. The access conditions say that the full column space $M$ is generated by any $k$ of the $N$ subspaces $W_j$, and that $W_j$ is generated by any $d$ of the subspaces $\Sij$, $i \neq j.$
Details for the parity-check matrix of a linear regenerating code are in \cite[Section 2.1]{D14}.

\begin{figure} \label{F1}
\begin{center}
\begin{tikzpicture}
  [scale=0.6,every node/.style={minimum size=2em}]
\draw
node at (3,10.2) {\large $\beta$} node at (4,10.2) {\large $\beta$} node at (5,10.2) {\large $\beta$} node at (6,10.2) {\large $\beta$} node at (7.2,10.2) {$\cdot$} node at (8.2,10.2) {$\cdot$}
node at (3,9.2) {\large $\beta$} node at (4,9.2) {\large $\beta$} node at (5,9.2) {\large $\beta$} node at (6,9.2) {\large $\beta$} node at (7.2,9.2) {$\cdot$} node at (8.2,9.2) {$\cdot$}
node at (3,8) {$\cdot$} node at (4,8) {\large $\beta$} node at (5,8) {\large $\beta$} node at (6,8) {\large $\beta$} node at (7.2,8) {$\cdot$} node at (8.2,8) {$\cdot$}
node at (3,7) {$\cdot$} node at (4,7) {$\cdot$} node at (5,7) {\large $\beta$} node at (6,7) {\large $\beta$} node at (7.2,7) {$\cdot$} node at (8.2,7) {$\cdot$}
node at (3,6) {$\cdot$} node at (4,6) {$\cdot$} node at (5,6) {$\cdot$} node at (6,6) {\large $\beta$} node at (7.2,6) {$\cdot$} node at (8.2,6) {$\cdot$}
node at (3,5) {$\cdot$} node at (4,5) {$\cdot$} node at (5,5) {$\cdot$} node at (6,5) {$\cdot$} node at (7.2,5) {$\cdot$} node at (8.2,5) {$\cdot$}
node at (3,3.8) {$\cdot$} node at (4,3.8) {$\cdot$} node at (5,3.8) {$\cdot$} node at (6,3.8) {$\cdot$} node at (7.2,3.8) {\large $\alpha$} node at (8.2,3.8) {$\cdot$}
node at (3,2.8) {$\cdot$} node at (4,2.8) {$\cdot$} node at (5,2.8) {$\cdot$} node at (6,2.8) {$\cdot$} node at (7.2,2.8) {$\cdot$} node at (8.2,2.8) {\large $\alpha$};

\draw [<->,left](1.6,8.8) -- node[xshift=-4pt] {\large $d+1-n$} (1.6,10.6);
\draw [<->,left](1.6,4.6) -- node[xshift=-4pt] {\large $\ell$} (1.6,8.4);
\draw [<->,left](1.6,2.4) -- node[xshift=-4pt] {\large $n-\ell$} (1.6,4.2);

\draw [<->,above](2.6,11.6) -- node {\large $\ell$} (6.4,11.6);
\draw [<->,above](6.8,11.6) -- node {\large $n-\ell$} (8.6,11.6);

\draw [decorate,decoration={brace,amplitude=6pt},xshift=0pt,yshift=0pt] (9.6,10.6) -- (9.6,8.8) node [black,midway,xshift=12pt,right] {$\begin{array}{l}\sum_{1 \leq j \leq \ell} H(W_j | S_{[1,n] \backslash j \rightarrow j}) \\[2ex] ~~~\leq~ \ell (d+1-n) \beta \end{array}$};
\draw [decorate,decoration={brace,amplitude=6pt},xshift=0pt,yshift=0pt] (9.6,8.4) -- (9.6,4.6) node [black,midway,xshift=12pt,right] {$\sum_{1 \leq i < j \leq \ell} H(S_{i \rightarrow j}) ~\leq~ \binom{\ell}{2} \beta $};
\draw [decorate,decoration={brace,amplitude=6pt},xshift=0pt,yshift=0pt] (9.6,4.2) -- (9.6,2.4) node [black,midway,xshift=12pt,right] {$H(W_{[\ell+1,n]}) ~\leq~ (n-\ell) \alpha $};

%\node at (0,-0.4) {};
%\draw[fill=red!80,semitransparent] (0,7) rectangle (2,9)
%node at (1,8) {a,b,c};
%\draw[fill=green!80,semitransparent] (3,8) circle [radius=0.5]
%node at (3,8) {a};
\end{tikzpicture}
\end{center}
\caption{Data collection from $n$ nodes.}
\end{figure}

\subsection{Outline and Results}

To describe the main result, consider the data collection scenario in Figure~1. Data is collected from a subset of $n$ nodes that are numbered $1$ to $n$ (out of a total of $N$ nodes). For a given $\ell$ with $0 \leq \ell \leq n$, the contents of nodes $\ell+1$ to $n$ is read from the nodes, an amount of size $(n-\ell)\alpha.$ The contents of nodes $\ell,\ell-1,\ldots,1$ is recovered in that order using repair information. When it is time to collect repair information for node $j$, for $1 \leq j \leq \ell$, repair information for that node is already available from nodes $j+1,\ldots,n$. The missing repair information can be collected from nodes $1,\ldots,j-1$ and form any $d+1-n$ nodes that are not among nodes $1$ to $n.$ Thus, with $B(n)=H(W_{[1,n]})$ the information content of $n$ nodes, 
\begin{equation} \label{eqfr2}
B(n) \leq B_\ell(n) := (n-\ell)\alpha + \binom{\ell}{2} \beta + \ell (d+1-n) \beta, ~~~~0 \leq \ell \leq n.
\end{equation}

In Section \ref{Sref}, we prove a version of this bound that includes an extra error term.

\medskip

(Theorem \ref{T1}) For $0 \leq \ell \leq n$,
\begin{align*}
B(n) + \Delta  &~\leq~ H(W_{[\ell+1,n]}) + \sum_{1 \leq i < j \leq \ell} H(S_{i \rightarrow j}) + \sum_{j \leq \ell} H(W_j | S_{[1,n] \backslash j \rightarrow j}) \label{eq1}  
%&~\leq~ (n-\ell)\alpha + \binom{\ell}{2} \beta + \ell (d+1-n) \beta. \nonumber
\end{align*}
where
$\Delta = \sum_{1 \leq i < j \leq \ell} H(S_{i \rightarrow j} | W_{[i+1,n]}) \geq 0$.
% and $\delta=(d+1-n)\beta.$
\bigskip

In Section \ref{Simp}, we exploit the error term to improve the upper bound.

\medskip

(Theorem \ref{T2}) For $u = 1,2, \ldots,v$, let $W_{n+u}$ be such that
\[ 
H( \Sij | W_{n+u} ) \leq H ( \Sij | W_{[i+1,n+u-1]} ),~~~~~~\text{for $1 \leq i < j \leq \ell$.}
\]
Then
\begin{align*}
B(n) + v B(n) ~\leq~ &H(W_{[\ell+1,n]}) + 
\sum_{1 \leq i < j \leq \ell} H(\Sij) + \sum_{j \leq \ell} H(W_j | S_{[1,n] \backslash j}) + \\
&~~+  \sum_{u=1}^{v} \left( H(W_{[\ell+1,n]}) + H(W_{n+u} | W_{[\ell+1,n]}) 
+ \sum_{j \leq \ell} H(W_j | S_{[1,n] \backslash j} W_{n+u}) \right) 
\end{align*}
\medskip

In Section \ref{Slin},  we give a choice $W_{n+u}$ for linear regenerating codes 
such that $H(W_{n+u}) = u ( n \alpha - B(n) ).$ With this choice the upper bound becomes

\medskip

(Theorem \ref{T3}) For a linear regenerating code, and for $v \geq 0$,
\[
\binom{v+2}{2} B \leq (v+1) B_\ell(n) + \binom{v+1}{2} n \alpha - v \binom{l}{2} \beta. % - \binom{v+1}{2} (n-k)(d+1-n) \beta.
\]
For $n=k+1=d+1$, and for $\ell=n$, 
\[
\binom{v+2}{2} B \leq \binom{v+1}{2} n \alpha + \binom{n}{2} \beta.
\]
This bound is Theorem 1.1 in \cite{PK15}. It is attained by layered codes (defined in \cite{TianSAVK15}).

\section{Refinement of the exact repair outer bound} \label{Sref}

For a regenerating code of length $N$ we fix an arbitrary ordering of the $N$ nodes and denote by $B(n)$ the amount of data on the last $n$ nodes. For a given $n$, we number the last $n$ nodes from $1$ to $n$. The remaining $N-n$ nodes are numbered from $0$ downwards.
The outer bound (\ref{eqfr2}) 
is piece-wise linear of the form $B(n) \leq \min \{ B_\ell(n) : 0 \leq \ell \leq n \}$, with each of the $B_\ell(n)$ a linear combination of the storage per node $\alpha$ and the helper bandwith between nodes $\beta$. 
In this section we derive a version $B(n) + \Delta \leq \min \{ B_\ell(n) : 0 \leq \ell \leq n \}$ with an explicit error term~$\Delta$. For $n=k$, the error term $\Delta$ gives a lower bound for the gap between the functional repair and the exact repair outer bounds. 

In deriving the outer bound we will only refer to helper information $\Sij$ for $i < j$. 
For a given $j$, $1 \leq j \leq n$, we consider the sequence $X_1, X_2, \ldots, X_n$ of $n$ variables 
\begin{equation} \label{eqj1}
X_i = \begin{cases} &S_{i \rightarrow j}, ~~\text{for $i < j$.} \\  &W_i, ~~\text{for $i \geq j$.}\end{cases}
\end{equation}
In either of the two cases $X_i$ is a function of the information $W_i$ at node $i$. The following lemma is a straightforward application of the chain rule and holds for an arbitrary sequence of $n$ random variables.
Nonetheless it is at the basis of everything that follows.

\begin{lemma} \label{L1} For a sequence $X_1, \ldots, X_n$ of $n$ random variables, and for $1 \leq j \leq n$, 
\[
H(X_j | X_{[j+1,n]}) + \sum_{i < j} H(X_i | X_{[i+1,n]}) = \sum_{i < j} H(X_i | X_{[i+1,n] \backslash j}) + H(X_j | X_{[1,n] \backslash j})
\]
\end{lemma}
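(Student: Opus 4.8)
The plan is to show that both sides of the claimed identity reduce to the single quantity $H(X_{[1,n]}) - H(X_{[j+1,n]})$, which is the conditional entropy $H(X_{[1,j]} | X_{[j+1,n]})$. Once each side is independently identified with this quantity, the lemma follows at once. The only tool I need is the chain rule $H(X_{[1,n]}) = \sum_{i=1}^{n} H(X_i | X_{[i+1,n]})$ and its conditional form, under the convention that $X_{[i+1,n]}$ is empty when $i=n$.

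For the left-hand side, I would observe that the term $H(X_j | X_{[j+1,n]})$ is precisely the $i=j$ summand missing from $\sum_{i<j} H(X_i | X_{[i+1,n]})$. Adjoining it produces $\sum_{i=1}^{j} H(X_i | X_{[i+1,n]})$, and since $X_{[i+1,j]} \cup X_{[j+1,n]} = X_{[i+1,n]}$, the conditional chain rule gives $\sum_{i=1}^{j} H(X_i | X_{[i+1,n]}) = H(X_{[1,j]} | X_{[j+1,n]}) = H(X_{[1,n]}) - H(X_{[j+1,n]})$.

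For the right-hand side, the key step is to apply the chain rule to the $n-1$ variables $X_{[1,n] \backslash j}$ in their natural order. Each $X_i$ with $i \neq j$ contributes $H(X_i | X_{[i+1,n] \backslash j})$; for $i>j$ the deletion of the index $j$ is vacuous, so those terms are just $H(X_i | X_{[i+1,n]})$ and sum to $H(X_{[j+1,n]})$, while the terms with $i<j$ are exactly the summands appearing in the lemma. Hence $\sum_{i<j} H(X_i | X_{[i+1,n] \backslash j}) = H(X_{[1,n] \backslash j}) - H(X_{[j+1,n]})$. Adding the remaining term $H(X_j | X_{[1,n] \backslash j}) = H(X_{[1,n]}) - H(X_{[1,n] \backslash j})$ cancels $H(X_{[1,n] \backslash j})$ and leaves $H(X_{[1,n]}) - H(X_{[j+1,n]})$, matching the left-hand side.

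The argument is entirely mechanical, so there is no real obstacle; the only care required is the bookkeeping of index sets — in particular, verifying that deleting $j$ from $X_{[i+1,n]}$ is vacuous exactly when $i \geq j$, and checking the degenerate cases $j=1$ and $j=n$, where one of the two chain-rule sums is empty and the identity collapses to a single term on each side.
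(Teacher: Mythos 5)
Your proof is correct and is essentially the paper's own argument: both reduce the two sides to $H(X_{[1,n]}) - H(X_{[j+1,n]})$ by applying the chain rule to $H(X_{[1,n]})$ and to $H(X_{[1,n]\backslash j})$ and then invoking $H(X_{[1,n]}) = H(X_{[1,n]\backslash j}) + H(X_j \mid X_{[1,n]\backslash j})$. The only cosmetic difference is that the paper phrases the identity as equality of joint entropies of two permuted sequences, while you verify the same three chain-rule decompositions directly.
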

\begin{proof} The claim says that the joint entropy is the same for the sequence $0,X_1, \ldots, X_n$ and the permuted sequence with $0$ and $X_j$ exchanged. For a formal proof, apply the chain rule $j$ times to $H(X_{[1,n]})$ and $j-1$ times to $H(X_{[1,n] \backslash j})$,
\begin{align*}
H(X_{[1,n]}) &= H(X_{[j+1,n]}) + \sum_{i \leq j} H(X_i | X_{[i+1,n]}), \\
H(X_{[1,n] \backslash j}) &= H(X_{[j+1,n]}) + \sum_{i < j} H(X_i | X_{[i+1,n] \backslash j}).
\end{align*}
Now use $H(X_{[1,n]}) = H(X_{[1,n] \backslash j}) + H(X_j | X_{[1,n] \backslash j})$.
\end{proof}

We apply the lemma to the sequence (\ref{eqj1}). 

\begin{proposition} \label{P1}
\[
H(W_j | W_{[j+1,n]}) + \sum_{i < j} H(S_{i \rightarrow j} | W_{[i+1,n]}) \leq \sum_{i < j} H(S_{i \rightarrow j}) + H(W_j | S_{[1,n] \backslash j \rightarrow j})
\]
\end{proposition}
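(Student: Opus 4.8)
The plan is to apply Lemma~\ref{L1} to the sequence \eqref{eqj1} and then compare the four resulting terms, one at a time, against the four terms of the proposition, tracking the direction of each inequality. The single structural fact that drives every comparison is that the helper variables are deterministic functions of the node contents: since $H(S_{i \rightarrow j} | W_i) = 0$, each $X_i$ with $i < j$ is a function of $W_i$, and consequently any tuple $X_T$ indexed by a subset $T \subseteq [i+1,n]$ is a deterministic function of $W_{[i+1,n]}$. Conditioning on a larger (more informative) variable can only decrease conditional entropy, so replacing an $X$-conditioning by the corresponding $W$-conditioning changes the terms in a controlled, one-directional way.

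Concretely, I would first read off the left-hand side of Lemma~\ref{L1} for this sequence. The lead term is exactly $H(X_j | X_{[j+1,n]}) = H(W_j | W_{[j+1,n]})$, an equality, because every index $i \geq j+1$ gives $X_i = W_i$. For the sum $\sum_{i<j} H(X_i | X_{[i+1,n]})$, each conditioning set $X_{[i+1,n]} = (S_{i+1 \rightarrow j},\dots,S_{j-1 \rightarrow j}, W_j,\dots,W_n)$ is a function of $W_{[i+1,n]}$, so by monotonicity $H(X_i | X_{[i+1,n]}) \geq H(S_{i \rightarrow j} | W_{[i+1,n]})$. Summing, the left side of Lemma~\ref{L1} is at least the left side of the proposition. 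This is the step that puts the $\Delta$-type quantity $\sum_{i<j} H(S_{i \rightarrow j} | W_{[i+1,n]})$ where we want it.

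Next I would bound the right-hand side of Lemma~\ref{L1} from above by the right-hand side of the proposition. In the sum $\sum_{i<j} H(X_i | X_{[i+1,n]\backslash j})$, simply dropping the conditioning gives $H(S_{i \rightarrow j} | X_{[i+1,n]\backslash j}) \leq H(S_{i \rightarrow j})$. For the last term, $X_{[1,n]\backslash j} = (S_{1\rightarrow j},\dots,S_{j-1\rightarrow j}, W_{j+1},\dots,W_n)$ carries \emph{more} information than $S_{[1,n]\backslash j \rightarrow j}$ (the tail $W_{j+1},\dots,W_n$ refines $S_{j+1\rightarrow j},\dots,S_{n\rightarrow j}$), so $H(X_j | X_{[1,n]\backslash j}) = H(W_j | X_{[1,n]\backslash j}) \leq H(W_j | S_{[1,n]\backslash j \rightarrow j})$. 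Chaining the three comparisons with the equality supplied by Lemma~\ref{L1} yields the proposition. Note that this argument nowhere invokes the access or repair conditions as equalities; it is a pure information inequality valid because the helper data factor through the node data.

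The main obstacle is not any hard computation but the bookkeeping of inequality directions: the left-side sum must be bounded \emph{below} while the right-side terms are bounded \emph{above}, and both rely on the same monotonicity principle applied in opposite senses (enlarging the conditioning set on the left, shrinking it on the right). The care needed is in verifying, for each term, that the relevant $X$-tuple really is a function of the $W$-tuple being substituted, so that conditional-entropy monotonicity applies in the intended direction.
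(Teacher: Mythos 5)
Your proof is correct and takes essentially the same route as the paper: the paper's one-line proof likewise instantiates Lemma~\ref{L1} with the sequence (\ref{eqj1}) and replaces the second left-hand term by the smaller $H(S_{i \rightarrow j} \mid W_{[i+1,n]})$ and the two right-hand terms by larger ones, which are exactly the three monotonicity substitutions you carry out. You merely spell out what the paper leaves implicit, namely that each substitution is licensed by the $X$-tuples being deterministic functions of the corresponding $W$-tuples (via $H(S_{i \rightarrow j} \mid W_i) = 0$), with the lead term $H(W_j \mid W_{[j+1,n]})$ unchanged.
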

\begin{proof}
In Lemma \ref{L1} we replace the second term on the left with a smaller term and the two terms on the right with larger terms.
\end{proof}

Using the proposition $\ell$ times we obtain a refinement of the outer bound (\ref{eqfr2}). 

\begin{theorem} \label{T1}
For $0 \leq \ell \leq n \leq d+1$,
\begin{align}
B(n) + \Delta  &~\leq~ H(W_{[\ell+1,n]}) + \sum_{1 \leq i < j \leq \ell} H(S_{i \rightarrow j}) + \sum_{j \leq \ell} H(W_j | S_{[1,n] \backslash j \rightarrow j}) \label{eq1} \\ 
&~\leq~ (n-\ell)\alpha + \binom{\ell}{2} \beta + \ell (d+1-n) \beta. \nonumber
\end{align}
where
$\Delta = \sum_{1 \leq i < j \leq \ell} H(S_{i \rightarrow j} | W_{[i+1,n]}) \geq 0$. % and $\delta=(d+1-n)\beta.$
%\end{equation}
%&C =  \sum_{j \leq \ell} \sum_{i < j} C^{(1)}_{i,j} + \sum_{j \leq \ell} \sum_{i < j} C^{(2)}_{i,j} + \sum_{j \leq \ell}  C^{(3)}_{j} \geq 0.
\end{theorem}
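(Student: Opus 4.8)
The plan is to apply Proposition~\ref{P1} once for each index $j$ with $1 \le j \le \ell$ and to add the resulting $\ell$ inequalities, using the chain rule to reassemble the left-hand side into $B(n)$. First I would expand $B(n) = H(W_{[1,n]})$ by the chain rule as
\[
B(n) = H(W_{[\ell+1,n]}) + \sum_{j=1}^{\ell} H(W_j | W_{[j+1,n]}),
\]
so that the terms $H(W_j | W_{[j+1,n]})$ appearing on the left-hand side of Proposition~\ref{P1} are exactly the increments that build $B(n)$ up from $H(W_{[\ell+1,n]})$.

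Next I would sum the inequality of Proposition~\ref{P1} over $j = 1, \dots, \ell$. The first terms on the left accumulate to $\sum_{j=1}^{\ell} H(W_j | W_{[j+1,n]})$, while each double sum $\sum_{j=1}^{\ell}\sum_{i<j}$ collapses to a single sum over the pairs $1 \le i < j \le \ell$. Thus the second term on the left becomes $\Delta = \sum_{1 \le i < j \le \ell} H(\Sij | W_{[i+1,n]})$, and the right-hand side becomes $\sum_{1 \le i < j \le \ell} H(\Sij) + \sum_{j \le \ell} H(W_j | S_{[1,n] \backslash j \rightarrow j})$. Substituting the chain-rule expansion of $B(n)$ and moving $H(W_{[\ell+1,n]})$ to the right then yields the first inequality of the theorem; that $\Delta \ge 0$ is immediate since it is a sum of conditional entropies.

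For the second inequality I would bound the three groups of terms separately. Subadditivity gives $H(W_{[\ell+1,n]}) \le (n-\ell)\alpha$ because each $H(W_i) = \alpha$, and likewise $\sum_{1 \le i < j \le \ell} H(\Sij) \le \binom{\ell}{2}\beta$ since each $H(\Sij) = \beta$ over the $\binom{\ell}{2}$ pairs. The one step that genuinely uses the code structure—and the main obstacle—is the estimate $H(W_j | S_{[1,n] \backslash j \rightarrow j}) \le (d+1-n)\beta$ for each $j \le \ell$. Here the conditioning set supplies helper data from the $n-1$ nodes in $[1,n] \backslash j$, which falls $d+1-n$ short of a full repair set of size $d$. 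I would augment it with helper symbols $\Sij$ from a set $I'$ of $d+1-n$ further nodes (available since $N \ge d+1$, using $n \le d+1$ so that $d+1-n \ge 0$), for which the repair condition $H(W_j | S_{I \rightarrow j}) = 0$ holds with $I = ([1,n] \backslash j) \cup I'$. Then
\[
H(W_j | S_{[1,n] \backslash j \rightarrow j}) \le H(S_{I' \rightarrow j} | S_{[1,n] \backslash j \rightarrow j}) + H(W_j | S_{I \rightarrow j}) = H(S_{I' \rightarrow j} | S_{[1,n] \backslash j \rightarrow j}) \le |I'|\,\beta = (d+1-n)\beta,
\]
and summing over the $\ell$ values of $j$ produces the term $\ell(d+1-n)\beta$, which completes the proof.
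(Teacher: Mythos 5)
Your proposal is correct and follows essentially the same route as the paper: expand $B(n)=H(W_{[\ell+1,n]})+\sum_{j\leq\ell}H(W_j\,|\,W_{[j+1,n]})$ by the chain rule, sum Proposition~\ref{P1} over $j=1,\dots,\ell$ so the double sums collapse to sums over pairs $1\leq i<j\leq\ell$, and identify the accumulated conditional terms with $\Delta$. Your explicit verification of the second inequality—in particular bounding $H(W_j\,|\,S_{[1,n]\backslash j\rightarrow j})\leq(d+1-n)\beta$ by adjoining helper symbols from $d+1-n$ nodes outside $[1,n]$ to complete a repair set of size $d$—is exactly the argument the paper leaves implicit in its discussion of Figure~1, and it is carried out correctly.
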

\begin{proof}
With $B(n)=H(W_{[1,n]})$, 
\begin{align*}
&B(n) + \sum_{j \leq \ell} \sum_{i < j} H(S_{i \rightarrow j} | W_{[i+1,n]}) \\
=~ &H(W_{[\ell+1,n]}) + \sum_{j \leq \ell} H(W_j | W_{[j+1,n]}) + \sum_{j \leq \ell} \sum_{i < j} H(S_{i \rightarrow j} | W_{[i+1,n]}) \\
\leq~ &H(W_{[\ell+1,n]}) + \sum_{j \leq \ell} \sum_{i < j} H(S_{i \rightarrow j}) 
+  \sum_{j \leq \ell} H(W_j | S_{[1,n] \backslash j \rightarrow j}).
\end{align*}
For the inequality use the proposition.
\end{proof}

%The right hand side is that of the functional repair bound. 
The theorem shows that for a given $0 \leq \ell \leq n$, there is a gap in the upper bound (\ref{eqfr2}) of size at least
\begin{equation} \label{eqg1}
\Delta = \sum_{1 \leq i < j \leq \ell} H(S_{i \rightarrow j} | W_{[i+1,n]})
\end{equation}
The terms in the sum capture that part of the helper information $\Sij$ may be redundant. There are two important cases with $\Delta = 0$, Minimum Storage Regenerating codes (MSR codes) and Minimum Bandwith Regenerating codes (MBR codes).  MSR codes have $I(W_j ; W_{J \backslash j}) = 0$ for $|J|=k$ and $H(M)=k\alpha.$
For MSR codes, the bound (\ref{eqfr2}) is achieved for $\ell=0$ and the summation for $\Delta$ is empty. Note that the summation for $\Delta$ is empty also when $\ell = 1$. MBR codes have $I(\Sij ; S_{I \backslash i \rightarrow j}) = 0$ for $|I|=d$ and $H(W_j) = d\beta.$ For MBR codes, the bound (\ref{eqfr2}) is achieved for $\ell=k$ and the terms in the summation for $\Delta$ are all zero.  For values of $\ell \not \in \{0,1,k\}$ we obtain improvements of (\ref{eqfr2}) from lower bounds for the gap (\ref{eqg1}). Our approach is to collect the helper information at a separate node $W$ such that $H(S_{i \rightarrow j} | W) \leq H(S_{i \rightarrow j} | W_{[i+1,n]})$. The same chain rule argument of Lemma \ref{L1} goes through if we add $W$ as node $W_{n+1}$ to the nodes $W_1, W_2, \ldots, W_n.$ This is worked out in the next section.

While Theorem \ref{T1} focuses on $\Delta$ as the main gap in the upper bound (\ref{eqfr2}), three other gaps can be pointed out. They are due to the transition from an equality in Lemma \ref{L1} to an inequality in Proposition \ref{P1} by replacing three of the terms. We quantify these gaps but will not consider them further in this paper.
\begin{align*}
H(X_i | X_{[i+1,n]}) 
&= H(S_{i \rightarrow j} | S_{[i+1,j-1] \rightarrow j} W_{[j,n]} ) \\
&= H(S_{i \rightarrow j} | W_{[i+1,n]}) + C^{(1)}_{i,j} \\[1ex]
C^{(1)}_{i,j} &= I( S_{i \rightarrow j} ; W_{[i+1,j-1]} | S_{[i+1,j-1] \rightarrow j} W_{[j,n]} ) \geq 0 \\[2ex]
%\end{align*}
%\begin{align*}
H(X_i | X_{[i+1,n] \backslash j}) 
&= H(S_{i \rightarrow j} | S_{[i+1,j-1] \rightarrow j} W_{[j+1,n]} ) \\
&= H(S_{i \rightarrow j}) - C^{(2)}_{i,j} \\[1ex]
C^{(2)}_{i,j} &= I(S_{i \rightarrow j} ; S_{[i+1,j-1] \rightarrow j} W_{[j+1,n]} ) \geq 0 \\[2ex]
%\end{align*}
%\begin{align*}
H(X_j | X_{[1,n] \backslash j}) 
&= H(W_j |  S_{[1,j-1] \rightarrow j} W_{[j+1,n]} ) \\
&= H(W_j |  S_{[1,n] \backslash j \rightarrow j} W_{[j+1,n]} ) \\
&= H(W_j |  S_{[1,n] \backslash j \rightarrow j}) -  C^{(3)}_j \\[1ex]
C^{(3)}_j &= I(W_j; W_{[j+1,n]} | S_{[1,n] \backslash j}) \geq 0
\end{align*}
%For the important case of layered regenerating codes and their shortened subcodes these terms are zero (Section~\ref{Slin}). 

%S_{[1,n] \backslash j \rightarrow j} W_{[j+1,n]}) \geq 0 
%&= H(S_{[n+1,d+1] \rightarrow j} ) -  C^{(3)}_j \\[1ex]
%C^{(3)}_j &= H(S_{[n+1,d+1] \rightarrow j} ) \\
%                   &~~~~ - I(S_{[n+1,d+1] \rightarrow j}; W_j |
%S_{[1,n] \backslash j \rightarrow j} W_{[j+1,n]}) \geq 0 
%\end{align*}
%They are ignored in the remainder and we focus on the term $\Delta$ in the Theorem \ref{T1}.

All three gaps vanish for the important class of layered codes (defined in \cite{TianSAVK15}).
 
%The heuristic for creating an error term is that, for schemes that approach capacity, the mutual information between 
%$S_{i \rightarrow j}$ and the variables $W_j, W_{j+1}, \ldots, W_n$ is concentrated at $W_j$. 

\section{Improvement of the exact repair outer bound} \label{Simp}

Starting point for the outer bound (\ref{eq1}) in Theorem \ref{T1} is Lemma \ref{L1}. The identity
\[
H(X_j | X_{[j+1,n]}) + \sum_{i < j} H(X_i | X_{[i+1,n]}) = \sum_{i < j} H(X_i | X_{[i+1,n] \backslash j}) + H(X_j | X_{[1,n] \backslash j})
\]
holds for any $n$ random variables with a common joint distribution. We applied it $\ell$ times, for $j \leq \ell$. For each $j \leq \ell$
it was used with the choice of variables
\begin{equation} \label{eqj2}
X_i = \begin{cases} &S_{i \rightarrow j}, ~~\text{for $i < j$.} \\  &W_i, ~~\text{for $i \geq j$.}\end{cases}
\end{equation}
To estimate the term $\Delta$ given by (\ref{eqg1}) we apply the same bound $v$ more times. Each time, before the bound is applied we add a carefully chosen term $X_{n+u} = W_{n+u}$ to the sequence $X_1, X_2, \ldots, X_{n+u-1}$, for $1 \leq u \leq v.$ Here $W_{n+u}$ is any function of $M$ such that
\begin{equation} \label{eqW}
H( \Sij | W_{n+u} ) \leq H ( \Sij | W_{[i+1,n+u-1]} ),~~~~~~\text{for $1 \leq i < j \leq \ell$.}
%I(W_i ; W_{[i+1,n+u-1]} | W_{n+u}) = 0.~~~~~~\text{for $i < j \leq \ell$.}
\end{equation}
%That is, with (\ref{eqcm2}),
%\begin{equation} \label{eqW2}
%H(W_i | W_{n+u} ) = H(W_i | W_{[i+1,n+u]}),~~~~~~\text{for $i \leq \ell$.}
%\end{equation}
The variables $W_{n+u}$ may be identified with added virtual nodes. 
For given $j \leq \ell$, the application of Lemma \ref{L1} to the extended sequence
$X_1, X_2, \ldots, X_{n+u}$ yields, for $0 \leq u \leq v$,
\begin{equation} \label{eqv}
H(X_j | X_{[j+1,n+u]}) + \sum_{i < j} H(X_i | X_{[i+1,n+u]}) ~=~ \sum_{i < j} H(X_i |  X_{[i+1,n+u] \backslash j}) + H(X_j | X_{[1,n+u] \backslash j})
\end{equation}
In the following lemma we take the sum of these $v+1$ equations.

\begin{lemma}  \label{L2}
Let $X_1, \ldots, X_n, X_{n+1}, \ldots, X_{n+v}$ be random variables such that 
\[
H(X_i | X_{n+u}) \leq H(X_i | X_{[i+1,n+u-1]}) ~~~~~~\text{for $1 \leq i < \ell$, $1 \leq u \leq v$}.
\]
Then, for $1 \leq j \leq \ell$, 
\[
\sum_{u=0}^v H(X_j | X_{[j+1,n+u]}) + \sum_{i < j} H(X_i | X_{[i+1,n+v]})
~\leq~ \sum_{i < j} H(X_i |  X_{[i+1,n] \backslash j}) + \sum_{u=0}^v H(X_j | X_{[1,n+u] \backslash j})
\]
\end{lemma}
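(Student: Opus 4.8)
The plan is to take the $v+1$ instances of the identity (\ref{eqv}), one for each $u$ with $0 \leq u \leq v$, and add them all together, then simplify each of the four resulting sums to match the claimed inequality. Writing out the sum of (\ref{eqv}) over $u = 0, \ldots, v$ gives
\[
\sum_{u=0}^v H(X_j | X_{[j+1,n+u]}) + \sum_{u=0}^v \sum_{i < j} H(X_i | X_{[i+1,n+u]})
~=~ \sum_{u=0}^v \sum_{i < j} H(X_i | X_{[i+1,n+u] \backslash j}) + \sum_{u=0}^v H(X_j | X_{[1,n+u] \backslash j}).
\]
The first sum on the left and the last sum on the right already appear verbatim in the statement, so those two need no work. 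The task reduces to handling the two middle double-sums, where the index $u$ must be eliminated.

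For the middle sum on the left, $\sum_{u=0}^v \sum_{i<j} H(X_i | X_{[i+1,n+u]})$, the idea is that conditioning on more variables only decreases entropy, so the $u=v$ term $H(X_i | X_{[i+1,n+v]})$ is the smallest; I would keep just this smallest term and drop the remaining $v$ terms from the sum (they are nonnegative), giving the lower bound $\sum_{i<j} H(X_i | X_{[i+1,n+v]})$ that appears on the left-hand side of the claim. Since I am discarding nonnegative terms on the left, this preserves the desired direction once the identity becomes an inequality. For the middle sum on the right, $\sum_{u=0}^v \sum_{i<j} H(X_i | X_{[i+1,n+u] \backslash j})$, I want to replace the whole $u$-sum by the single term $\sum_{i<j} H(X_i | X_{[i+1,n] \backslash j})$. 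Here the hypothesis on the $X_{n+u}$ is exactly what is needed: for each $u \geq 1$ one has $H(X_i | X_{n+u}) \leq H(X_i | X_{[i+1,n+u-1]})$, and since conditioning on extra variables only decreases entropy, $H(X_i | X_{[i+1,n+u] \backslash j}) \leq H(X_i | X_{n+u}) \leq H(X_i | X_{[i+1,n+u-1]})$; telescoping or bounding each such term by the corresponding one at the previous stage lets me collapse the $v+1$ terms down to the single $u=0$ term $H(X_i | X_{[i+1,n] \backslash j})$ as an upper bound.

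The main obstacle I anticipate is the bookkeeping in the second middle sum: the hypothesis is stated with the conditioning variable $X_{n+u}$ alone, whereas the expression from (\ref{eqv}) conditions on the larger index set $X_{[i+1,n+u] \backslash j}$, so I must first use monotonicity of conditional entropy to pass from $H(X_i | X_{[i+1,n+u]\backslash j})$ down to $H(X_i | X_{n+u})$ before invoking the hypothesis, and then verify that the resulting chain of inequalities genuinely telescopes to leave exactly the $u=0$ term and nothing extra. Once both middle sums are bounded in the correct directions — the left one decreased, the right one bounded above — the identity-turned-inequality yields precisely the stated conclusion for each $1 \leq j \leq \ell$.
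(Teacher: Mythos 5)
Your overall route is the paper's: sum the identity (\ref{eqv}) over $0 \leq u \leq v$ and eliminate the index $u$ from the two middle double sums via the chain $H(X_i \mid X_{[i+1,n+u]\setminus j}) \leq H(X_i \mid X_{n+u}) \leq H(X_i \mid X_{[i+1,n+u-1]})$, which is word for word the inequality in the paper's one-line proof. However, the bookkeeping as you describe it is wrong in both halves, taken separately. Discarding the left-hand terms $H(X_i \mid X_{[i+1,n+u]})$ for $0 \leq u \leq v-1$ merely ``because they are nonnegative'' only yields (claimed LHS) $\leq$ (full RHS), which gives nothing toward the claim, since the full right side exceeds the claimed right side by the nonnegative quantity $\sum_{u=1}^{v}\sum_{i<j} H(X_i \mid X_{[i+1,n+u]\setminus j})$. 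And the companion step --- collapsing the right-hand $u$-sum ``down to the single $u=0$ term as an upper bound'' --- is false as a stand-alone statement: a sum of $v+1$ nonnegative conditional entropies is not bounded above by one of its terms, and the right-hand sum does not telescope internally, because your chain bounds the right-hand term at stage $u$ by the \emph{left-hand} term $H(X_i \mid X_{[i+1,n+u-1]})$ at stage $u-1$, not by the right-hand term $H(X_i \mid X_{[i+1,n+u-1]\setminus j})$ at stage $u-1$.

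The repair is immediate from the ingredients you already have, and it is exactly how the paper argues: the two reductions are not independent simplifications but a single paired cancellation across the equality. For each $1 \leq u \leq v$ and each $i < j$, match the right-hand term $H(X_i \mid X_{[i+1,n+u]\setminus j})$ against the left-hand term $H(X_i \mid X_{[i+1,n+u-1]})$; by your displayed chain their difference is $\leq 0$, so after summing the identity over $u$ these pairs cancel in the correct direction, leaving precisely the $u=v$ term $\sum_{i<j} H(X_i \mid X_{[i+1,n+v]})$ on the left and the $u=0$ term $\sum_{i<j} H(X_i \mid X_{[i+1,n]\setminus j})$ on the right, i.e., the stated inequality. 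So the key inequality is correct and the approach coincides with the paper's, but the plan as literally written (drop the left terms, then separately collapse the right sum) would fail; each discarded left term at stage $u-1$ must be used to absorb the extra right term at stage $u$.
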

\begin{proof} Take the summation of (\ref{eqv}) over $0 \leq u \leq v$. For $1 \leq u \leq v$, the inequality  
\[
H(X_i |  X_{[i+1,n+u] \backslash j}) \leq H(X_i | X_{n+u}) \leq H(X_i | X_{[i+1,n+u-1]})
\]
gives a cancellation of terms in the summation.
\end{proof}

%The choice of $X_{n+u} = W_{n+u}$ in (\ref{eqW}) guarantees that the condition of the lemma holds for the sequence $X_i$ in (\ref{eqj2}). Indeed $H( X_i | W_i ) = 0$ and
%\[
%H(X_i | X_{n+u}) = H(X_i | W_{n+u} ) \leq H(X_i | W_{[i+1,n+u-1]}) \leq H(X_i | X_{[i+1,n+u-1]})~~~~~~\text{for $i \leq \ell$.}
%\]
%And therefore
%\[
%H(X_i |  X_{n+u}) =  H(X_i | W_{n+u}) = H(X_i | W_{[i+1,n+u]}) \leq H(X_i | X_{[i+1,n+u]})  \leq H(X_i| X_{n+u}).
%\]
%So that equality holds throughout and the condition in the lemma is fullfilled. We apply the lemma with the sequence (\ref{eqj2}) %and its extension given by (\ref{eqW}), (\ref{eqW2}).

We apply the lemma with the sequence (\ref{eqj2}).

\begin{proposition} \label{P2}
Let $W_{n+u}$, $1 \leq u \leq v$, be such that
\[ 
H( \Sij | W_{n+u} ) \leq H ( \Sij | W_{[i+1,n+u-1]} ),~~~~~~\text{for $1 \leq i < j \leq \ell$.}
\]
Then, for $1 \leq j \leq \ell$,
\[
\sum_{u=0}^v H(W_j | W_{[j+1,n+u]}) + \sum_{i < j} H(\Sij | W_{[i+1,n+v]})
~\leq~ \sum_{i < j} H(\Sij) + \sum_{u=0}^v H(W_j | S_{[1,n] \backslash j} W_{n+u}).
\]
\end{proposition}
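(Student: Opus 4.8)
The plan is to apply Lemma \ref{L2} to the sequence (\ref{eqj2}), extended by the virtual nodes $X_{n+u} = W_{n+u}$ for $1 \le u \le v$, and then to convert each of the resulting generic terms into the statement of the proposition using the single structural fact that $\Sij$ is a function of $W_i$, so that conditioning on a block $W_{[a,b]}$ dominates conditioning on any of the $S_{i \rightarrow j}$ with $a \le i \le b$. First I would verify the hypothesis of Lemma \ref{L2} for this sequence. For $i < j$ we have $X_i = \Sij$ and $X_{[i+1,n+u-1]} = S_{[i+1,j-1]\to j}\,W_{[j,n+u-1]}$; since $S_{[i+1,j-1]\to j}$ is a function of $W_{[i+1,j-1]}$, conditioning on $W_{[i+1,n+u-1]}$ is at least as strong, so the proposition's hypothesis $H(\Sij\mid W_{n+u}) \le H(\Sij\mid W_{[i+1,n+u-1]})$ forces $H(X_i\mid X_{n+u}) \le H(X_i\mid X_{[i+1,n+u-1]})$, which is exactly what Lemma \ref{L2} requires (for $i < j \le \ell$, hence $i < \ell$).

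Once the hypothesis is in place, applying Lemma \ref{L2} supplies the central inequality of the chain, and the next step is to read off what each generic term becomes under (\ref{eqj2}). On the left, the first sum is already $\sum_{u=0}^v H(W_j\mid W_{[j+1,n+u]})$ verbatim, while the second sum is $\sum_{i<j} H(\Sij\mid S_{[i+1,j-1]\to j}\,W_{[j,n+v]})$. On the right, the first sum is $\sum_{i<j} H(\Sij\mid S_{[i+1,j-1]\to j}\,W_{[j+1,n]})$ and the second sum is $\sum_{u=0}^v H(W_j\mid S_{[1,j-1]\to j}\,W_{[j+1,n+u]})$, with the convention that $W_{n+0}$ is empty, so that the $u=0$ summand reduces to $H(W_j\mid S_{[1,j-1]\to j}\,W_{[j+1,n]})$.

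It then remains to replace these three sums by the cleaner terms in the proposition, checking that every replacement moves the relevant side of the inequality in the correct direction. On the left I would discard the conditioning redundancy via $H(\Sij\mid W_{[i+1,n+v]}) \le H(\Sij\mid S_{[i+1,j-1]\to j}\,W_{[j,n+v]})$, which only decreases the left side relative to Lemma \ref{L2} and is therefore safe. On the right I would use $H(\Sij\mid \cdot) \le H(\Sij)$ for the first sum, and for the second sum observe that the conditioning set $S_{[1,j-1]\to j}\,W_{[j+1,n+u]}$ contains, as functions, both $S_{[j+1,n]\to j}$ and $W_{n+u}$, hence dominates the proposition's set $S_{[1,n]\setminus j\to j}\,W_{n+u}$, giving $H(W_j\mid S_{[1,j-1]\to j}\,W_{[j+1,n+u]}) \le H(W_j\mid S_{[1,n]\setminus j\to j}\,W_{n+u})$; both right-hand replacements only increase the right side and are therefore safe. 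Chaining, the proposition's left-hand side is at most Lemma \ref{L2}'s left-hand side, which is at most its right-hand side, which is at most the proposition's right-hand side.

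The estimates themselves are routine monotonicity-of-conditioning bounds; the main obstacle is bookkeeping, namely translating the index-shifted generic blocks $X_{[a,b]}$ and $X_{[a,b]\setminus j}$ into the mixed $W$/$S$ notation without sign errors and confirming that each of the three relaxations pushes the correct side the correct way. A secondary point to watch is the $u=0$ endpoint and the convention that the added node $W_{n+u}$ is absent there, so that the $u=0$ summand correctly collapses to the exact-repair term $H(W_j\mid S_{[1,n]\setminus j\to j})$ appearing in Theorem \ref{T1}.
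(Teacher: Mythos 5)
Your proposal is correct and takes essentially the same route as the paper: apply Lemma \ref{L2} to the sequence (\ref{eqj2}) extended by $X_{n+u}=W_{n+u}$, verify the lemma's hypothesis through the chain $H(X_i\,|\,X_{n+u})=H(X_i\,|\,W_{n+u})\le H(X_i\,|\,W_{[i+1,n+u-1]})\le H(X_i\,|\,X_{[i+1,n+u-1]})$ using that each $X_i$ is a function of $W_i$, and then relax the left-hand terms downward and the right-hand terms upward by monotonicity of conditioning. Your translation of the mixed $W$/$S$ conditioning sets and the $u=0$ convention is, if anything, more explicit than the paper's one-line justification.
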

\begin{proof}
In the result of Lemma \ref{L2} we replace the terms on the left with smaller terms and the terms on the right with larger terms. Since $X_{n+u} = W_{n+u}$ and $H( X_i | W_i ) = 0$ for all $i \leq n+u$, the condition in the proposition guarantees the condition that is needed for the lemma. For $1 \leq i \leq \ell$,
\[
H(X_i | X_{n+u}) = H(X_i | W_{n+u} ) \leq H(X_i | W_{[i+1,n+u-1]}) \leq H(X_i | X_{[i+1,n+u-1]}).
\]
\end{proof}
%\[
%H( \Sij | W_{n+u} ) \leq H ( \Sij | W_{[i+1,n+u-1]} ).~~~~~~\text{for $i < j \leq \ell$.}
%I(W_i ; W_{[i+1,n+u-1]} | W_{n+u}) = 0,~~~~~~\text{for $i < j \leq \ell$.}
%\]

\begin{theorem} \label{T2} For given $0 \leq \ell \leq n$ and $v \geq 0$, let $W_{n+u}$, $1 \leq u \leq v$, be such that
\[ 
H( \Sij | W_{n+u} ) \leq H ( \Sij | W_{[i+1,n+u-1]} ),~~~~~~\text{for $1 \leq i < j \leq \ell.$}
\]
Then
\begin{align*}
B(n) + v B(n) ~\leq~ &H(W_{[\ell+1,n]}) + 
\sum_{1 \leq i < j \leq \ell} H(\Sij) + \sum_{j \leq \ell} H(W_j | S_{[1,n] \backslash j}) + \\
&~~+  \sum_{u=1}^{v} \left( H(W_{[\ell+1,n]}) + H(W_{n+u} | W_{[\ell+1,n]}) 
+ \sum_{j \leq \ell} H(W_j | S_{[1,n] \backslash j} W_{n+u}) \right). 
\end{align*}
%\[
%B(n) + v B(n) ~\leq~ B_{\ell}(n) +
%\sum_{u=1}^{v} H(W_{[\ell+1,n]} W_{n+u}) + \sum_{u=1}^{v} \sum_{j \leq \ell} H(W_j | S_{[1,n] \backslash j} W_{n+u})
%\]
\end{theorem}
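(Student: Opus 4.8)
The plan is to obtain the theorem by summing Proposition \ref{P2} over all $j$ with $1 \le j \le \ell$ and then reorganizing the two sides with the chain rule. Summing the left-hand side of Proposition \ref{P2} produces
\[
\sum_{j \le \ell} \sum_{u=0}^v H(W_j \mid W_{[j+1,n+u]}) \;+\; \sum_{j \le \ell}\sum_{i<j} H(\Sij \mid W_{[i+1,n+v]}),
\]
while the right-hand side produces $\sum_{i<j\le\ell} H(\Sij)$ together with $\sum_{u=0}^v\sum_{j\le\ell} H(W_j \mid S_{[1,n]\backslash j} W_{n+u})$, where the $u=0$ summand carries no virtual node and so equals $\sum_{j\le\ell} H(W_j \mid S_{[1,n]\backslash j})$. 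These last terms already match, term for term, the corresponding repair terms in the statement, so no further work is needed on that side.

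First I would discard the second double sum on the left, $\sum_{j\le\ell}\sum_{i<j} H(\Sij \mid W_{[i+1,n+v]}) \ge 0$, since it is a sum of conditional entropies and hence nonnegative; this is exactly the analogue of dropping the gap term $\Delta$ in Theorem \ref{T1}. Next, for each fixed $u$ I would apply the chain rule to the ordered sequence $W_1, \dots, W_{n+u}$ to rewrite
\[
\sum_{j\le\ell} H(W_j \mid W_{[j+1,n+u]}) \;=\; H(W_{[1,n+u]}) - H(W_{[\ell+1,n+u]}),
\]
and then bound $H(W_{[1,n+u]}) \ge H(W_{[1,n]}) = B(n)$ by monotonicity of joint entropy (adjoining the virtual nodes cannot decrease it). Summing over $0 \le u \le v$ this contributes the required $(v+1)B(n) = B(n) + vB(n)$.

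The remaining task, and the one I expect to be the crux, is to account for the subtracted terms $H(W_{[\ell+1,n+u]})$ against the terms $(v+1)H(W_{[\ell+1,n]}) + \sum_{u=1}^v H(W_{n+u}\mid W_{[\ell+1,n]})$ appearing in the statement. For $u=0$ this is the trivial identity. For $u \ge 1$ I would split $H(W_{[\ell+1,n+u]}) = H(W_{[\ell+1,n]}) + H(W_{[n+1,n+u]}\mid W_{[\ell+1,n]})$ by the chain rule, so that matching the statement reduces to the single inequality
\[
H(W_{[n+1,n+u]}\mid W_{[\ell+1,n]}) \;\le\; H(W_{n+u}\mid W_{[\ell+1,n]}),
\]
i.e. that the earlier virtual nodes $W_{n+1},\dots,W_{n+u-1}$ contribute nothing beyond $W_{n+u}$ once $W_{[\ell+1,n]}$ is known. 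This is where the freedom in choosing the $W_{n+u}$ enters: taking them nested, each a function of the one with larger index as in the linear construction of Section \ref{Slin}, makes this an equality and collapses the collected virtual information to the single term $H(W_{n+u}\mid W_{[\ell+1,n]})$. Assembling the three pieces — the discarded nonnegative sum, the $(v+1)B(n)$ from monotonicity, and this telescoped virtual-node term — yields the claimed bound. The genuinely delicate point is only this last step; everything else is bookkeeping of which prefix $W_{[\,\cdot\,,n+u]}$ appears in each conditional entropy, inherited directly from Proposition \ref{P2}.
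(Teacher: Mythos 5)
Your proposal is correct in substance and follows the same route as the paper's proof: the paper likewise sums Proposition \ref{P2} over $1 \le j \le \ell$, discards the nonnegative sum $\sum_{j\le\ell}\sum_{i<j}H(\Sij \mid W_{[i+1,n+v]})$, collapses $\sum_{j\le\ell}H(W_j \mid W_{[j+1,n+u]})$ by the chain rule, and extracts $(v+1)B(n)$ exactly as you do. The one point of divergence is the step you rightly single out as the crux, and here your honesty exposes something the paper passes over: the paper's proof asserts that the summation yields $H(W_{[1,\ell]} \mid W_{[\ell+1,n]})+\sum_{u=1}^{v}H(W_{[1,\ell]} \mid W_{[\ell+1,n]}W_{n+u})$ on the left, whereas the chain rule actually yields $\sum_{u=0}^{v}H(W_{[1,\ell]} \mid W_{[\ell+1,n+u]})$, and passing from the latter to the former \emph{enlarges} the left-hand side; it is precisely equivalent to your inequality $H(W_{[n+1,n+u]} \mid W_{[\ell+1,n]}) \le H(W_{n+u} \mid W_{[\ell+1,n]})$, which does not follow from hypothesis (\ref{eqW}) alone. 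So what both your argument and the paper's actually establish is the theorem for nested families of virtual nodes (each $W_{n+u}$ determining $W_{n+1},\dots,W_{n+u-1}$), or, for arbitrary families, the weaker bound with $H(W_{[n+1,n+u]} \mid W_{[\ell+1,n]})$ in place of $H(W_{n+u} \mid W_{[\ell+1,n]})$; note that replacing $W_{n+u}$ by the joint variable $W_{[n+1,n+u]}$ preserves condition (\ref{eqW}), so the nested form is always available at the cost of that substitution. This suffices for everything downstream: the linear choice (\ref{eqW2}) of Section \ref{Slin} is automatically nested, since the spanned subspaces increase with $u$, so Corollary \ref{C2}, Lemma \ref{L3} and Theorem \ref{T3} are unaffected. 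In short, your proof is the paper's proof, with the single genuinely delicate step made explicit rather than silent.
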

\begin{proof}
 The proposition yields, after summation over $1 \leq j \leq \ell$,
\begin{align*}
&H(W_{[1,\ell]} | W_{[\ell+1,n]}) + \sum_{u=1}^{v} H(W_{[1,\ell]} | W_{[\ell+1,n]} W_{n+u})  \\
~\leq~
&\sum_{j \leq \ell} \sum_{i < j} H(\Sij) + \sum_{j \leq \ell} H(W_j | S_{[1,n] \backslash j})
+ \sum_{u=1}^{v} \sum_{j \leq \ell} H(W_j | S_{[1,n] \backslash j} W_{n+u}).
\end{align*}
So that
\begin{align*}
B(n) + v B(n) ~\leq~ &
\sum_{j \leq \ell} \sum_{i < j} H(\Sij) + \sum_{j \leq \ell} H(W_j | S_{[1,n] \backslash j})
+ \sum_{u=1}^{v} \sum_{j \leq \ell} H(W_j | S_{[1,n] \backslash j} W_{n+u})\;+\\
&~~+H(W_{[\ell+1,n]}) +  \sum_{u=1}^{v} H(W_{[\ell+1,n]} W_{n+u}). 
%=~ &B_{\ell}(n) +
%\sum_{u=1}^{v} H(W_{[\ell+1,n]} W_{n+u}) + \sum_{u=1}^{v} \sum_{j \leq \ell} H(W_j | S_{[1,n] \backslash j} W_{n+u})
\end{align*}
After reordering the terms, the claim follows.
\end{proof}

%The upper bound in Theorem \ref{T2} depends on the choices for $W_{n+u},$ for $1 \leq u \leq v$. The best possible upper bound requires optimization over all choices that meet Condition (\ref{eqW}).

\begin{corollary} \label{C2} With notation and conditions as in the theorem,
\begin{align*}
(v+1) B(n) ~\leq~ &(v+1) B_\ell(n) +  \sum_{u=1}^{v} \left( H(W_{n+u}) - \binom{\ell}{2} \beta \right). 
\end{align*}
\end{corollary}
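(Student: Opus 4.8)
The plan is to feed the three standard entropy constraints of the preliminaries into the right-hand side of Theorem~\ref{T2}. The left-hand side is already $(v+1)B(n)$, so it suffices to bound the right-hand side by $(v+1)B_\ell(n) + \sum_{u=1}^v (H(W_{n+u}) - \binom{\ell}{2}\beta)$. I would treat the first line and each of the $v$ summands separately, aiming to show that the first line is at most $B_\ell(n)$ and that the $u$-th summand is at most $B_\ell(n) + H(W_{n+u}) - \binom{\ell}{2}\beta$; summing then gives the claim after collecting the $B_\ell(n)$ contributions.

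For the first line, $H(W_{[\ell+1,n]}) \leq (n-\ell)\alpha$ by subadditivity together with $H(W_j)=\alpha$; the middle sum equals $\binom{\ell}{2}\beta$ since $H(\Sij)=\beta$ and there are exactly $\binom{\ell}{2}$ pairs $i<j\leq\ell$; and the last sum is at most $\ell(d+1-n)\beta$. Adding these three bounds reproduces $B_\ell(n)$.

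The last of these is the one point that uses the repair structure rather than pure subadditivity, and it is the only nontrivial ingredient. Since $[1,n]\backslash j$ supplies only $n-1$ of the $d$ helpers that determine $W_j$, I would enlarge it to a helper set $I$ of size $d$ with $j \not\in I$ by adjoining $d+1-n$ indices from outside $[1,n]$, and let $S'$ denote the helper information sent to $j$ by these new indices. Then $H(W_j | S_{I \rightarrow j}) = 0$, so
\[
H(W_j | S_{[1,n]\backslash j}) = H(W_j | S_{[1,n]\backslash j}) - H(W_j | S_{I \rightarrow j}) = I\!\pr{W_j ; S' \mid S_{[1,n]\backslash j}} \leq H(S') \leq (d+1-n)\beta.
\]
Because conditioning only decreases entropy, the same bound survives adjoining $W_{n+u}$, giving $H(W_j | S_{[1,n]\backslash j} W_{n+u}) \leq (d+1-n)\beta$ as well.

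For the $u$-th summand I would bound $H(W_{[\ell+1,n]}) \leq (n-\ell)\alpha$ and $H(W_{n+u} | W_{[\ell+1,n]}) \leq H(W_{n+u})$ by monotonicity, and bound the remaining sum by $\ell(d+1-n)\beta$ using the previous paragraph. Thus the summand is at most $(n-\ell)\alpha + H(W_{n+u}) + \ell(d+1-n)\beta$, which is exactly $B_\ell(n) + H(W_{n+u}) - \binom{\ell}{2}\beta$. Summing the first line together with the $v$ summands yields the stated inequality. I expect no serious obstacle here: apart from the repair bound $H(W_j | S_{[1,n]\backslash j}) \leq (d+1-n)\beta$ (which also requires the standing assumption $n \leq d+1$ so that this coefficient is nonnegative), every step is just subadditivity and monotonicity of conditional entropy.
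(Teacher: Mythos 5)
Your proof is correct and is essentially the derivation the paper intends: the corollary follows from Theorem~\ref{T2} by substituting the standard constraints $H(W_{[\ell+1,n]}) \leq (n-\ell)\alpha$, $H(\Sij) = \beta$, $H(W_{n+u} \mid W_{[\ell+1,n]}) \leq H(W_{n+u})$, and the repair bound $H(W_j \mid S_{[1,n] \backslash j \rightarrow j}) \leq (d+1-n)\beta$ (the Figure~1 data-collection argument, which you correctly rederive by enlarging $[1,n]\backslash j$ to a full helper set of size $d$ and bounding the resulting mutual information, valid under the standing assumption $n \leq d+1$), exactly paralleling the second inequality in Theorem~\ref{T1}. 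The paper's Remark~\ref{R2} records the same bookkeeping in conditional form via $B(n-\ell) \leq (n-\ell)\alpha$, so there is no substantive difference in approach.
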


\begin{remark} \label{R2}
%The upper bound for $B(n) = H(W_{[1,n]})$ in the theorem combines an upper bound for $H(W_{[1,\ell]} | W_{[\ell+1,n]})$ with the trivial $H(W_{[\ell+1,n]}) \leq (n-\ell)\alpha.$ 
If we use Theorem \ref{T2} to obtain an upper bound for 
the  conditional entropy $H(W_{[1,\ell]} | W_{[\ell+1,n]})$ we find
\begin{align*}
(v+1) (B(n)-B(n-\ell)) ~\leq~ &(v+1) (B_\ell(n)-(n-\ell)\alpha) +  \sum_{u=1}^{v} \left( H(W_{n+u}) - \binom{\ell}{2} \beta \right). 
\end{align*}
Together with $B(n-\ell) \leq (n-\ell)\alpha$ this gives Corollary \ref{C2}.
\end{remark}

\section{Linear regenerating codes} \label{Slin}

For linear regenerating codes, Condition (\ref{eqW})
\[ 
H( \Sij | W_{n+u} ) \leq H ( \Sij | W_{[i+1,n+u-1]} ),~~~~~~\text{for $1 \leq i < j \leq \ell$,}
\]
holds for
\[
W_{n+u} = \langle \Sij \cap W_{[i+1,n+u-1]} : 1 \leq i < j \leq \ell \rangle
\]
and thus for
\begin{equation} \label{eqW2}
W_{n+u} = \langle W_i \cap W_{[i+1,n+u-1]} : i \leq \ell \rangle.
\end{equation}
%\[
%I(W_i ; W_{[i+1,n+u-1]} | W_{n+u}) = 0,~~~~~~\text{for $i \leq \ell$,}
%\]
%or the equivalent Condition (\ref{eqW2})
%\[
%H(W_i | W_{n+u} ) = H(W_i | W_{[i+1,n+u]}),~~~~~~\text{for $i \leq \ell$.}
%\]
%For this choice we can bound $H(W_{n+u})$.

\begin{lemma} \label{L3} For $W_{n+u}$ as in (\ref{eqW2}),
\begin{align*}
H(W_{n+u}) &~\leq~  u\; ( \sum_{i=1}^\ell  H(W_i) + H(W_{[\ell+1,n]}) - H(W_{[1,n]}) ) \\
&~\leq~ u\; \min ( \ell \alpha, n \alpha - H(W_{[1,n]}) )
\end{align*}
\end{lemma}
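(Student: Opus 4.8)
The plan is to argue entirely with dimensions, using $H(X)=\dim X$ and $H(X,Y)=\dim(X+Y)$, and to exploit two structural features of the virtual nodes defined in (\ref{eqW2}). Since each generator $W_i \cap W_{[i+1,n+u-1]}$ of $W_{n+u}$ lies in $W_i \subseteq W_{[1,n]}$, every virtual node satisfies $W_{n+v} \subseteq W_{[1,n]}$, so that $W_{[1,n+u-1]} = W_{[1,n]}$ for all $u$. Moreover, because $W_{[i+1,n+v-1]} \subseteq W_{[i+1,n+v]}$, each generator grows with $v$, hence the virtual nodes form an increasing chain $W_{n+1} \subseteq W_{n+2} \subseteq \cdots$; in particular $W_{[\ell+1,n+u-1]} = W_{[\ell+1,n]} + W_{n+u-1}$, so only the single largest previous virtual node matters.

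Next I would bound $\dim W_{n+u}$. As $W_{n+u}$ is the span of the $\ell$ subspaces $W_i \cap W_{[i+1,n+u-1]}$,
\[
H(W_{n+u}) \;\le\; \sum_{i=1}^\ell \dim\bigl(W_i \cap W_{[i+1,n+u-1]}\bigr)
= \sum_{i=1}^\ell \Bigl( H(W_i) + H(W_{[i+1,n+u-1]}) - H(W_{[i,n+u-1]}) \Bigr),
\]
where I used $W_i + W_{[i+1,n+u-1]} = W_{[i,n+u-1]}$. The last two summands telescope to $H(W_{[\ell+1,n+u-1]}) - H(W_{[1,n+u-1]})$. Substituting $H(W_{[1,n+u-1]}) = H(W_{[1,n]})$ and $H(W_{[\ell+1,n+u-1]}) \le H(W_{[\ell+1,n]}) + H(W_{n+u-1})$ from the first paragraph yields the recursion
\[
H(W_{n+u}) \;\le\; \delta + H(W_{n+u-1}),
\qquad \delta := \sum_{i=1}^\ell H(W_i) + H(W_{[\ell+1,n]}) - H(W_{[1,n]}).
\]
A one-line induction on $u$, with base case $u=1$ (where the right side carries no virtual node, giving $H(W_{n+1}) \le \delta$ directly from the telescoping identity), then produces the first inequality $H(W_{n+u}) \le u\,\delta$.

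The second inequality is immediate once we insert $H(W_i) = \alpha$, so that $\delta = \ell\alpha + H(W_{[\ell+1,n]}) - H(W_{[1,n]})$: the bound $\delta \le \ell\alpha$ follows from $W_{[\ell+1,n]} \subseteq W_{[1,n]}$, and $\delta \le n\alpha - H(W_{[1,n]})$ follows from $H(W_{[\ell+1,n]}) \le (n-\ell)\alpha$. The step I expect to be the crux is the nesting observation: a naive estimate $H(W_{[\ell+1,n+u-1]}) \le H(W_{[\ell+1,n]}) + \sum_{v<u} H(W_{n+v})$ would only give a bound quadratic in $u$, whereas recognizing that the virtual nodes form an increasing chain collapses that sum to the single term $H(W_{n+u-1})$, which is exactly what yields the linear factor $u$.
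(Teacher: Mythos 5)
Your proof is correct and follows essentially the same route as the paper: bounding $H(W_{n+u})$ by the sum of the intersection dimensions $\dim\bigl(W_i \cap W_{[i+1,n+u-1]}\bigr)$, telescoping to $\sum_{i\le\ell} H(W_i) + H(W_{[\ell+1,n+u-1]}) - H(W_{[1,n+u-1]})$, splitting off the virtual-node contribution, and closing by induction on $u$. The only difference is presentational: where the paper compresses the last step into ``apply induction,'' you make explicit the nesting $W_{n+1} \subseteq W_{n+2} \subseteq \cdots$ that collapses $H(W_{[n+1,n+u-1]})$ to the single term $H(W_{n+u-1})$ and thereby yields the linear factor $u$ --- a detail the paper indeed relies on implicitly.
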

\begin{proof}
\begin{align*}
H(W_{n+u}) &\leq~ \sum_{i \leq \ell} ( H(W_i) - H(W_i | W_{[i+1,n+u-1]}) ) \\
&=~ \sum_{i \leq \ell} H(W_i) + H( W_{[\ell+1,n+u-1]}) - H( W_{[1,n+u-1]}) \\
&\leq~ \sum_{i \leq \ell} H(W_i) + H( W_{[\ell+1,n]}) +H(W_{[n+1,n+u-1}]) - H( W_{[1,n]} ) 
%&=~ \sum_{i \leq \ell} H(W_i) + H( W_{[n+1,n+u-1]} | W_{\ell+1,n]}) + H( W_{[\ell+1,n]} ) - H( W_{[1,,n]}) \\
%&=~ \sum_{i \leq \ell} H(W_i) - H( W_{[1,\ell]} | W_{[\ell+1,n+u-1]}) \\
%&\leq~\sum_{i = 1}^n H(W_i) + H( W_{[n+1,n+u-1]} ) -  H(W_{[1,n]}) 
\end{align*}
Now apply induction to complete the proof.
\end{proof}

%We will see that only the case 
%With the latter choice  several of the terms in $\sum_{j \leq \ell} H(W_j | S_{[1,n] \backslash j} W_{n+u})$ vanish.

We apply Corollary \ref{C2}.

%with this choice of $W_{n+u}$, $1 \leq u \leq v.$

\begin{theorem} \label{T3}
For a linear regenerating code, and for $v \geq 0$,
\[
\binom{v+2}{2} B(n) \leq (v+1) B_\ell(n) + \binom{v+1}{2} n \alpha - v \binom{l}{2} \beta.
\]
%The special case $n=k+1=d+1 = \ell$,
%\[
%\binom{v+2}{2} B \leq \binom{v+1}{2} n \alpha + \binom{n}{2} \beta,
%\]
%is equivalent to Theorem 1.1 in \cite{PK15}.
%\[
%(v+1)B(n) ~\leq~ (v+1)B_{\ell}(n) - v \binom{\ell}{2} \beta + \binom{v+1}{2} (R(n) - (n-k)\delta(n))  
%\]
\end{theorem}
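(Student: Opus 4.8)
The plan is to start from Corollary \ref{C2}, which states
\[
(v+1) B(n) ~\leq~ (v+1) B_\ell(n) + \sum_{u=1}^{v} \left( H(W_{n+u}) - \binom{\ell}{2} \beta \right),
\]
and substitute the upper bound on $H(W_{n+u})$ furnished by Lemma \ref{L3}. The key observation is that the linear choice (\ref{eqW2}) gives $H(W_{n+u}) \leq u\,(n\alpha - H(W_{[1,n]})) = u\,(n\alpha - B(n))$, using the second line of Lemma \ref{L3} and the identity $B(n)=H(W_{[1,n]})$. First I would plug this in to obtain
\[
(v+1) B(n) ~\leq~ (v+1) B_\ell(n) + \sum_{u=1}^{v} u\,(n\alpha - B(n)) - v \binom{\ell}{2}\beta.
\]

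Next I would evaluate the arithmetic sum $\sum_{u=1}^{v} u = \binom{v+1}{2}$, so the middle term becomes $\binom{v+1}{2}(n\alpha - B(n))$. The essential bookkeeping step is to collect the $B(n)$ terms on the left-hand side: this yields
\[
\left( (v+1) + \binom{v+1}{2} \right) B(n) ~\leq~ (v+1) B_\ell(n) + \binom{v+1}{2} n\alpha - v \binom{\ell}{2}\beta.
\]
I would then verify the binomial identity $(v+1) + \binom{v+1}{2} = \binom{v+2}{2}$, which is the only nonroutine algebraic point and follows from $\binom{v+2}{2} = \binom{v+1}{2} + (v+1)$ (Pascal's rule). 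This gives exactly the claimed coefficient $\binom{v+2}{2}$ on $B(n)$ and completes the bound.

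I do not anticipate a serious obstacle here, since the proof is a direct substitution followed by rearrangement. The only point requiring care is justifying that the linear construction (\ref{eqW2}) genuinely satisfies the hypothesis (\ref{eqW}) of Theorem \ref{T2} and Corollary \ref{C2}, namely that $H(\Sij \mid W_{n+u}) \leq H(\Sij \mid W_{[i+1,n+u-1]})$; this is the content of the discussion preceding Lemma \ref{L3}, where one notes that for linear codes $W_{n+u}$ contains each intersection $\Sij \cap W_{[i+1,n+u-1]}$, so conditioning on $W_{n+u}$ removes at least as much uncertainty from $\Sij$ as conditioning on $W_{[i+1,n+u-1]}$. With that hypothesis in force, Corollary \ref{C2} applies verbatim and the chain of substitutions above yields the theorem.
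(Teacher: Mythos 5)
Your proposal is correct and follows essentially the same route as the paper, whose proof of Theorem \ref{T3} is precisely the substitution of the bound $H(W_{n+u}) \leq u\,(n\alpha - B(n))$ from Lemma \ref{L3} into Corollary \ref{C2}, followed by evaluating $\sum_{u=1}^{v} u = \binom{v+1}{2}$ and collecting the $B(n)$ terms via $(v+1) + \binom{v+1}{2} = \binom{v+2}{2}$. Your extra care in checking that the choice (\ref{eqW2}) satisfies hypothesis (\ref{eqW}) --- because $W_{n+u}$ contains each intersection $\Sij \cap W_{[i+1,n+u-1]}$, so in the linear dictionary $H(\Sij \mid W_{n+u}) \leq H(\Sij \mid W_{[i+1,n+u-1]})$ --- is exactly the paper's remark at the start of Section \ref{Slin}, so nothing is missing.
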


\begin{proof}
Use Corollary \ref{C2} in combination with Lemma \ref{L3}.
\[
(v+1) B(n) ~\leq~ (v+1) B_\ell(n) +  \sum_{u=1}^{v} \left( u (n\alpha - B(n) ) - \binom{\ell}{2} \beta \right).
\]
\end{proof}

For linear regenerating codes with $n=k+1=d+1$, and for $\ell=n$, the bound 
\[
\binom{v+2}{2} B \leq \binom{v+1}{2} n \alpha + \binom{n}{2} \beta
\]
is Theorem 1.1 in \cite{PK15}. % It is attained by layered codes (defined in \cite{TianSAVK15}).

\bigskip

% For some choices of small parameters the improvement is significant. For example for codes with $(N=5,k=3,d=4)$, the first two bounds yield better results than the last two. For larger values of the parameters the last two bounds produce better results. 

For a regenerating code with $(k=2p,d=3p)$ and $(\alpha=2p, \beta=1)$, the functional repair outer bound yields $B \leq (7p^2+p)/2$. The minimum in $B \leq \min \{ B_\ell(k) : 0 \leq \ell \leq k \}$ is attained for
$\ell=p$. Corollary 3.3 in \cite{D14} lowers the bound for exact repair regenerating codes 
by $(p^2-1)/16.$ Theorem \ref{T3}, with $\ell=n=k=2p$ and $v=1$, lowers the same bound by $p^2/6.$

%The bounds \cite[Proposition 2 and Theorem 1]{SSK13} and \cite[Theorem 4.2]{D14} for exact repair regenerating codes improve the functional repair bound  by at most $\beta=1$. 
%\section{Conclusion}

%The outer bound for exact repair regenerating codes is known to be tighter than the functional repair outer bound. Including this paper, three different analytic approaches have been used to analyse the difference. They are
%(1) \cite[Proposition 2 and Theorem 1]{SSK13}, \cite[Theorem 4.2]{D14}, (2) \cite[Theorem 4.2]{D14}, and (3) \cite[Theorem 1.1]{PK15}, [Theorem 4.2, this paper]. The proof in \cite{PK15} refers to the parity check matrix of a regenerating code, 
%while in this paper we followed the track of entropy inequalities. With hindsight that the bounds agree for linear codes, it is clear that what is achieved in \cite{PK15} by puncturing the dual code is achieved in this paper by shortening the 

%bp:=(n,k,d,l,x)->(n-l)*x+binomial(l,2)+l*(d+1-n);
%bvl:=(n,k,d,l,x,v)->((v+1)*bp(n,k,d,l,x)-v*binomial(l,2)+binomial(v+1,2)*n*x)/binomial(v+2,2);

\nocite{AhmadW14,DRWS11}

\newpage

%\bibliographystyle{plain}
%\bibliography{regen}

\appendix

\section{Three proofs for $(4,3,3)$ outer bounds} \label{A433}

Proofs 1 and 2 are based on \cite{D14}. Proof 3 follows the current paper.

\bigskip

(Proof 1)
\[
 \begin {array}{ccccccc} 
&1&2&3&4\\ [1ex]
1~&\cd&\xd&\cd&\cd \\
2~&\cd&\cd&\cd&\cd \\
3~&\cd&\cd&\xa&\cd \\ 
4~&\cd&\cd&\cd&\xa 
  \end {array}  \qquad \qquad  \begin {array}{ccccccc} 
&1&2&3&4\\ [1ex]
1~&\xa&\cd&\cd&\cd \\
2~&\cd&\cd&\xd&\xd \\
3~&\cd&\cd&\cd&\xd \\ 
4~&\cd&\cd&\cd&\cd 
  \end {array}  \qquad \qquad   \begin {array}{ccccccc} 
&1&2&3&4\\ [1ex]
1~&\cd&\cd&\xd&\xd \\
2~&\cd&\xa&\cd&\cd \\
3~&\cd&\cd&\cd&\xd \\ 
4~&\cd&\cd&\cd&\cd 
  \end {array} 
\]
\begin{align*}
B &\leq H(W_4 | W_3 S_{1 \rightarrow 2}) +  H(W_3 S_{1 \rightarrow 2})  \\
                                            &\leq H(W_4 | S_{3 \rightarrow 4}) + H(W_3 S_{1 \rightarrow 2}) \\[1ex]
B &\leq H(S_{2 \rightarrow 3} | W_1 S_{2 \rightarrow 4} S_{3 \rightarrow 4}) + H(W_1 S_{2 \rightarrow 4} S_{3 \rightarrow 4}) \\
&\leq H(S_{2 \rightarrow 3} | W_4 S_{3 \rightarrow 4}) + H(W_1 S_{2 \rightarrow 4})  + H(S_{3 \rightarrow 4}) \\[1ex]
B &\leq H(S_{1 \rightarrow 3} |  W_2 S_{1 \rightarrow 4} S_{3 \rightarrow 4}) + H(W_2 S_{1 \rightarrow 4} S_{3 \rightarrow 4} )
 \\
&\leq H(S_{1 \rightarrow 3} |  S_{2 \rightarrow 3} W_4 S_{3 \rightarrow 4}) + H(W_2 S_{1 \rightarrow 4} S_{3 \rightarrow 4} )  \\[2ex]
3B &\leq  H(S_{1 \rightarrow 3} {S_{2 \rightarrow 3}} W_4 {S_{3 \rightarrow 4}})+ H(W_3 S_{1 \rightarrow 2}) + %\\ &~~~~+ 
H(W_1 S_{2 \rightarrow 4}) + H(W_2 S_{1 \rightarrow 4} S_{3 \rightarrow 4} ) \\
&=  H(S_{1 \rightarrow 3} {S_{2 \rightarrow 3}} W_4 )+ H(W_3 S_{1 \rightarrow 2}) + %\\ &~~~~+ 
H(W_1 S_{2 \rightarrow 4}) + H(W_2 S_{1 \rightarrow 4} S_{3 \rightarrow 4} ) 
 \\[2ex]
3B &\leq \sum_{1 \leq j \leq 4} H(W_j) + \sum_{1 \leq i < j \leq 4} H(\Sij).
\end{align*}

\bigskip

(Proof 2)
\[
 \begin {array}{ccccccc} 
&1&2&3&4\\ [1ex]
1~&\cd&\xd&\cd&\cd \\
2~&\cd&\cd&\cd&\cd \\
3~&\cd&\cd&\xa&\cd \\ 
4~&\cd&\cd&\cd&\xa 
  \end {array}  \qquad \qquad   \begin {array}{ccccccc} 
&1&2&3&4\\ [1ex]
1~&\xa&\cd&\cd&\cd \\
2~&\cd&\cd&\xd&\xd \\
3~&\cd&\cd&\cd&\xd \\ 
4~&\cd&\cd&\cd&\cd 
  \end {array} \qquad \qquad  \begin {array}{ccccccc} 
&1&2&3&4\\ [1ex]
1~&\cd&\cd&\cd&\cd \\
2~&\cd&\xa&\cd&\cd \\
3~&\cd&\cd&\xa&\cd \\ 
4~&\cd&\cd&\cd&\xa 
  \end {array} 
\]

\begin{align*}
B &\leq H(W_3 W_4) + H(S_{1 \rightarrow 2})  \\[1ex]
B &\leq H( S_{2 \rightarrow 3} S_{2 \rightarrow 4}) +  H( W_1 S_{3 \rightarrow 4}) \\[1ex]
B &\leq H(W_3 W_4 | W_2 ) + H(W_2) \leq H(W_3 W_4 | S_{2 \rightarrow 3} S_{2 \rightarrow 4}) + H(W_2)  \\[2ex]
3B &\leq H(W_3 W_4 S_{2 \rightarrow 3} S_{2 \rightarrow 4}) + H(W_3 W_4) + H(S_{1 \rightarrow 2})  + H( W_1 S_{3 \rightarrow 4}) + H(W_2) \\
&\leq H(W_3 W_4 S_{2 \rightarrow 3}) + H(W_3 W_4  S_{2 \rightarrow 4}) + H(S_{1 \rightarrow 2})  + H( W_1 S_{3 \rightarrow 4}) + H(W_2) \\
%&\leq H(W_3 | W_4 S_{2 \rightarrow 3}) + H(W_4 S_{2 \rightarrow 3}) + H(W_4 | W_3  S_{2 \rightarrow 4}) + H(W_3  S_{2 \rightarrow 4}) + H(S_{1 \rightarrow 2})  + H( W_1 S_{3 \rightarrow 4}) + H(W_2) \\
&\leq H(W_4 S_{1 \rightarrow 3} S_{2 \rightarrow 3}) + H( W_3 S_{1 \rightarrow 4} S_{2 \rightarrow 4})+ H(S_{1 \rightarrow 2})  + H( W_1 S_{3 \rightarrow 4}) + H(W_2) \\[2ex]
3B&\leq \sum_{1 \leq j \leq 4} H(W_j) + \sum_{1 \leq i < j \leq 4} H(\Sij).
\end{align*}

\newpage

(Proof 3)

\begin{table}[H]
\begin{center}
%\begin{tabular}%{l@{\hspace{18mm}}r@{\hspace{3mm}}r@{\hspace{9mm}}r@{\hspace{12mm}}r@{\hspace{3mm}}r@{\hspace{3mm}}rrr}
%\multicolumn{1}{l}{Code}  &{$d_{GST}$}   &{$d_{GST2}$}   &{$d_{B}$}  &{$d_{ABZ}$}   &{$d_{ABZ+}$}   &{$d_{ABZ'}$} \\
\begin{tabular}{cc@{\hspace{8mm}}ccccc@{\hspace{8mm}}ccccc}
\toprule 
&&1&2&3&4& &1&2&3&4\\[1ex]
\midrule
$M$ & &0  &0 &0 &0 & &$W_1$  &$W_2$ &$W_3$ &$W_4$ \\[1ex]
%\midrule
%$[n-d,0]$ & &$S_{0 \rightarrow 1}$ &$S_{0 \rightarrow 2}$&$\cdots$&$S_{0 \rightarrow \ell-1}$&$S_{0 \rightarrow \ell}$ &
%&$S_{0 \rightarrow 1}$ &$S_{0 \rightarrow 2}$&$\cdots$&$S_{0 \rightarrow \ell-1}$&$S_{0 \rightarrow \ell}$ \\[1ex]
\midrule
$W_1$ & &$W_1$&$S_{1 \rightarrow 2}$&$S_{1 \rightarrow 3}$&$S_{1 \rightarrow 4}$ &
&$0$&$S_{1 \rightarrow 2}$&$S_{1 \rightarrow 3}$&$S_{1 \rightarrow 4}$ \\[1ex]
$W_2$ & &$W_2$&$W_2$&$S_{2 \rightarrow 3}$&$S_{2 \rightarrow 4}$ & &$W_2$&$0$&$S_{2 \rightarrow 3}$&$S_{2 \rightarrow 4}$  \\[1ex]
$W_3$ & &$W_3$&$W_3$&$W_3$&$S_{3 \rightarrow 4}$& &$W_3$&$W_3$&0&$S_{3 \rightarrow 4}$  \\[1ex] 
$W_4$ & &$W_4$&$W_4$&$W_4$&$W_4$ & &$W_4$&$W_4$&$W_4$&$0$ \\[1ex]
\midrule
$W_5$ & &$W_5$&$W_5$&$W_5$&$W_5$   & &$W_5$&$W_5$&$W_5$&$W_5$   \\[1ex]
\bottomrule 
\end{tabular}
\end{center}
\caption{Pairs of columns with the same entropy}
\label{Table:marg2}
\end{table}

%The $4 \times 4$ matrix on the left has entries $S_{i \rightarrow j}$ for $i \leq j$ and $W_i$ for $i \geq j$. In the matrix on the right the lower trinangular part in each column is the same but the upper triangular part, including the diaginal entry, is cyclically shifted to bring the %diaginal entry on top.

Table \ref{Table:marg2} contains random variables $W_i$ and $\Sij$ for a regenerating code with four nodes and parameters $(k=3,d=3)$. The last row is obtained by adding a node $W_5$ whose contents will be chosen later.
Columns with the same label $1 \leq j \leq 4$ contain the same variables. For a pair of columns with the same label we compute the column entropy using the chain rule from the bottom to the top. The computation is done first for columns from row $W_4$ upwards and then for the extended columns from row $W_5$ upwards. By invoking the chain rule each entry in the table contributes to the entropy of its column with its entropy conditional on the entries below it. 

\bigskip

We compare the sum of the column entropies for the four columns on the left and on the right. First we ignore the row $W_5$ (or set $W_5=0$). The entries below the diagonal produce the same terms left and right. The four remaining entries with $W_i$ on the left (in the diagonal positions) sum to $H(W_1 W_2 W_3 W_4)$. The four remaining entries with $W_i$ on the right (in the top row) all produce $0$ terms (using $d=3$). The remaining entries on the right sum to at most $\sum_{1 \leq i < j \leq 4} S_{i \rightarrow j}$ (with equality if and only if for each term the conditional entropy equals the actual entropy). We still have to account for the entries with $S_{i \rightarrow j}$ on the left. In each case, an entry contributes at least $H(\Sij | W_{[i+1,4]})$. Thus
 \[
H(W_1 W_2 W_3 W_4) +  \sum_{1 \leq i < j \leq 4} H(\Sij|W_{[i+1,4]}) \leq \sum_{1 \leq i < j \leq 4} H(\Sij).
\]
We repeat the comparison but now include the constant row $W_5$.
\[
H(W_1 W_2 W_3 W_4| W_5 ) +  \sum_{1 \leq i < j \leq 4} H(\Sij|W_{[i+1,5]}) \leq \sum_{1 \leq i < j \leq 4} H(\Sij| W_5).
\]
For $W_5$ such that
\[
H(\Sij|W_5) \leq H(\Sij|W_{[i+1,4]})
\]
we obtain 
\begin{align*}
&H(W_1 W_2 W_3 W_4 ) + H(W_1 W_2 W_3 W_4 | W_5 )  \\ \leq~ &H(W_1 W_2 W_3 W_4 ) + \sum_{1 \leq i < j \leq 4} H(\Sij|W_5) 
~\leq~ \sum_{1 \leq i < j \leq 4} H(\Sij).
\end{align*}
In the linear setting it suffices to choose for $W_5$ a vector space that  contains $W_j \cap W_{[j+1,4]}$ for  $j = 1,2,3$.
This results in $H(W_5) = \sum_{1 \leq j \leq 4} H(W_j) - H(W_1 W_2 W_3 W_4)$. And, with $B= H(W_1 W_2 W_3 W_4 )$, in
\[
3B \leq \sum_{1 \leq j \leq 4} H(W_j) + \sum_{1 \leq i < j \leq 4} H(\Sij).
\]

\end{document}